\newcommand{\sect}[1]{Section~\ref{#1}\xspace}
\newcommand{\sysname}{PrivPy\xspace}
\newcommand{\para}[1]{\smallskip\noindent\textbf{#1}}
\newcommand{\tabincell}[2]{\begin{tabular}{@{}#1@{}}#2\end{tabular}}
\newcommand{\code}[1]{\texttt{#1}\xspace}
\newcommand{\cl}[1]{\code}
\newcommand{\server}[1]{$\mathbf{S_#1}$}
\newcommand{\share}[1]{[\![#1]\!]}
\newcommand{\map}[1]{\widetilde{#1}}
\newcommand{\alert}[1]{{\color{black} #1}}
\newcounter{protocol}
\newenvironment{protocol}[1][htb]
  {
   \let\c@algocf\c@protocol
   \begin{algorithm}[#1]%
       \small
  }{\end{algorithm}}
\lstdefinestyle{mystyle}{
	keywordstyle=\color{magenta},
	basicstyle=\ttfamily\footnotesize,
	breakatwhitespace=false,         
	breaklines=true,                 
	captionpos=b,                    
	keepspaces=true,                 
	numbers=left,                    
	numbersep=5pt,                  
	showspaces=false,                
	showstringspaces=false,
	showtabs=flase,                  
	tabsize=2,
}
\newtheorem{theo}{Theorem}
\newtheorem{defi}{Definition}
\begin{document}
\title{\sysname: Enabling Scalable and General Privacy-Preserving Machine Learning}

\author{Yi Li$^1$, Yitao Duan$^2$, Yu Yu$^3$, Shuoyao Zhao$^3$ and Wei Xu$^1$ \\
{\em  $^1$Tsinghua University \quad
             $^2$Netease Youdao \quad
             $^3$Shanghai Jiaotong University} \\ [2mm]
}
\date{}

\maketitle

\begin{abstract}
	Privacy is a big hurdle for collaborative data mining across multiple parties.  We present multi-party computation (MPC) framework designed for large-scale data mining tasks.  \sysname combines an easy-to-use and highly flexible Python programming interface with state-of-the-art secret-sharing-based MPC backend.  With essential data types and operations (such as NumPy arrays and broadcasting), as well as automatic code-rewriting, programmers can write modern data mining algorithms conveniently in familiar Python.  We demonstrate that we can support many real-world machine learning algorithms (e.g. logistic regression and convolutional neural networks) and large datasets (e.g. 5000-by-1-million matrix) with minimal algorithm porting effort. 
\end{abstract}

\section{Introduction}
\label{section:intro}

Privacy is an important issue in big data age. 
The success of data mining is often built on data, and it is often desirable to 
integrate data from multiple sources for better mining results. 
However, the unrestricted exchanging of sensitive data may threaten users' privacy 
and is often prohibited by laws or business practices. 
How to protect privacy while allowing the integration of multiple data sources demands prompt solutions. 

Secure multi-party computation (MPC) allows players to collectively compute a function without 
revealing private information except for the final output.  
MPC often uses various cryptographic primitives, such as garbled circuit~\cite{YAO_PROTOCOL} and secret sharing~\cite{SHAMIR}, 
with different efficiency and security assumptions.
After more than 30 years of development, we have started to see that the real-world data mining applications start to use MPC~\cite{PEM, PPDM_STATUS}.
However, numerous challenges still exist that prevent widespread adoption of secure computation
techniques. 


One of the most important issue hindering MPC's adoption is programmability, especially for ``big data'' applications.  
Despite of the development of efficiency improvement of MPC during the past decades, existing MPC solutions often ignore the core requirements of data mining applications. They either require considerable expertise in cryptography to understand the cost of each operation, or use special programming languages with high learning curves~\cite{L1, TASTY, SPDZ, SECUREML, SECUREC, OBLIVM}.  
Some useful solutions, such as \cite{EMP}, though providing rich interfaces for MPC, mainly focus on basic MPC operations, including not only basic arithmetics but also low-level cryptography tools such as \emph{oblivious transfer}~\cite{OT}. 
In contrast, machine learning programmers use Python-based frameworks like PyTorch~\cite{PYTORCH}, Tensorflow~\cite{TENSORFLOW} and Scikit-learn~\cite{SCIKIT_LEARN} with built-in support of high-level data types like real numbers, vectors and matrices, as well as  non-linear functions such as the logistic function and ReLu.  
It is almost impossible for data scientists to rebuild and optimize all these often taken-for-granted primitives in a modern machine learning package in an MPC language.  On the other hand, it is also costly for MPC experts to rewrite all the machine learning algorithm packages.  Thus, it is essential to design an MPC front-end that is friendly with the data mining community, which is Python with NumPy~\cite{NUMPY} nowadays.  
Actually, many machine learning frameworks use Python front-ends and provide Numpy-style array operations to ease machine learning programming.

In this paper, we propose \sysname, an efficient framework for privacy-preserving collaborative 
data mining, aiming to provide an elegant end-to-end solution for data mining programming. 
The \sysname front-end provides Python interfaces that resemble those from NumPy, one of the most popular Python packages, as well as a wide range of functions commonly used in machine learning. 
We also provide an computation engine which is based on secret sharing and provides efficient arithmetics.
We would like to stress that the main goal of \sysname is not to make theoretic breakthrough in cryptographic protocols, but rather to build a practical 
system that enables elegant machine learning programming on secure computation frameworks and makes right trade-offs between efficiency and security.
In particular, we make the following contributions:

\begin{enumerate}
\item \textbf{Python programming interface with high-level data types.}  We provide a very clean Python language integration with privacy-enabled common operations and high-level primitives, including \emph{broadcasting} that manipulates arrays of different shapes, and the \emph{ndarray} methods, two Numpy~\cite{NUMPY} features widely utilized to implement machine learning algorithms, with which developers can port complex machine learning algorithms onto \sysname with minimal effort. 

\item \textbf{Automatic code check and optimization.} Our front-end will help the programmers avoid ``performance pitfalls'', by checking the code and optimizing it automatically. 

\item  \textbf{Decoupling programming front-end with computation back-ends.} We introduce a general private operator layer to allow the same interface to support multiple computation back-ends, allowing trade-offs among different performance and security assumptions.  Our current implementation supports the SPDZ back-end, the ABY3 back-end and our own computation engine.  


\item \textbf{Validation on large-scale machine learning tasks.} We demonstrate the practicality of our system for data mining applications, such as data query, logistic regression and convolutional neural network (CNN), on real-world datasets and application scenarios. 
We also show that our system can scale to large-scale tasks by transparently manipulating a 5000-by-1-million matrix.
\end{enumerate}

\section{Related Work}
\label{section:related_work}

In this paper, we mainly focus on privacy-preserving computation systems for general arithmetics, especially for data mining tasks.
A practical such system includes two parts: an efficient computation engine and an easy-to-use programming front-end.

Frameworks based on (fully) homomorphic encryption~\cite{FHE_STATISTICS, HELIB} are impractical due to heavy computation overhead. 
Approaches based on garbled circuit (GC)~\cite{FairplayMP, EMP, OBLIV_C, GC_3PC} will be impractical for general-purpose arithmetical computations, especially for various kinds of machine learning algorithms, as they are costly in bandwidth.
There are also many MPC frameworks using secret sharing supporting general arithmetics.
For example, \cite{IMPROVED_3PC} performs integer/bit multiplication with 1 round and optimal communication cost using three semi-honest servers. 
SPDZ~\cite{SPDZ} uses addition secret sharing and can tolerate up to $n-1$ corrupted parties.
While natively supporting efficient integer operations, most of them (e.g. \cite{SHAREMIND, SPDZ}) support real numbers by parsing each shared integer into $m$ field elements ($m$ is the bit length of the each field element) and use bit-level operations to simulate fixed/floating point operations~\cite{FIXED_POINT, HYBRID, SECURE_FLOATING}, thus requires each party to send $O(m)$ meessages. 
SecureML~\cite{SECUREML} is based on two-party secret sharing and provides built-in fixed-point multiplication with $O(1)$ message complexity, but requires expensive precomputation to generate Beaver multiplication triples. 
ObliviousNN~\cite{OBLIVIOUS_NN} optimizes the performance of dot product, but suffers similar problem with SecureML.
ABY3~\cite{ABY3}, which extends the work of \cite{IMPROVED_3PC} and provides three-party computation, is the state-of-the-art for general arithmetics. 
To perform fixed-point multiplication, ABY3 provides two alternatives: one requires a lightweight precomputation and each party needs to send no more than 2 messages in 1 round in the online phase, while the other requires no precomputation and each party sends no more than 2 messages, but needs 2 rounds.
In comparison, our computation engine, which also provides built-in support for fixed-points, performs fixed-point multiplication in 1 round without precomputation and each party only sends 2 messages.

We emphasize that the adoption of privacy-preserving computation is beyond the computation efficiency and the programmability is as the same importance.
TASTY~\cite{TASTY} and ABY~\cite{ABY} provide interfaces for programmers to convert between different schemes.
However, they only expose low-level interfaces and the programmers should decide by themselves which cryptographic tools to choose and when to convert them, making the learning curve steep.
L1~\cite{L1} is an intermediate language for MPC and supports basic operations. But L1 is a domain-specific language and does not provide high-level primitives to ease array/matrix operations frequently used in machine learning algorithms.
\cite{AUTOMATED_SYNTHESIS} and \cite{ENCRYPTED_CLASSIFICATION} suffer from similar problems.
PICCO~\cite{PICCO} supports additive secret sharing and provides customized C-like interfaces. But the interfaces are not intuitive enough and only support simple operations for array. Also, according to their report, the performance is not practical enough for large-scale arithmetical tasks. 
KSS~\cite{KSS} and ObliVM~\cite{OBLIVM} also suffer from these issues.
\cite{SPDZ_COMPILER} provides a compiler for SPDZ and \cite{GENERALIZING_SPDZ} extends it to support more MPC protocols. But they are still domain-specific and do not provide enough high-level primitives for machine learning tasks.
\sysname, on the other hand, stays compatible with Python and provides high-level primitives (e.g. broadcasting) with automatic code check and optimization, requiring no learning curve on the application programmer side, making it possible to implement machine learning algorithms conveniently in a privacy-preserving situation.

\section{\sysname Design Overview}
\label{section:system_design}
\subsection{Problem formulation}
\label{section:preliminaries}

\para{Application scenarios.  } We identify the following two major application scenarios for privacy-preserving data mining:

\begin{itemize}
    \item \para{multi-source data mining. } It is common that multiple organizations (e.g. hospitals), each independently collecting part of a dataset (e.g. patients' information), want to jointly train a model (e.g. for inferring a disease), without revealing any information. 
    \item \para{inference with secret model and data. } Sometimes the parameters of a model are valuable.  For example, the credit scoring parameters are often kept secret.  Neither the model owner nor the data owner want to leak their data in the computation. 
\end{itemize}

\para{Problem formulation.  } 
We formulate both scenarios as an MPC problem:  there are $n$ clients $C_i (i = 1, 2, \dots, n)$. 
Each $C_i$ has a set of private data $D_i$ as its input.
The goal is to use the union of all $D_i$'s to compute some function $o = f(D_1, D_2, \dots, D_n)$, while no private information other than the output $o$ is revealed during the computation. 
$D_i$ can be records collected independently by $C_i$, and $C_i$'s can use them to jointly train a model or perform data queries.

\para{Security assumptions.  }
\label{section:assumption}
Our design is based on two widely adopted assumptions in the security community~\cite{SHAREMIND, IMPROVED_3PC, PEM}: 
1) All of the servers are \emph{semi-honest}, which means all servers follow the protocol and would not conspire with other servers, but they are curious about the users' privacy and would steal information as much as possible; 
and 2) all communication channels are secure and adversaries cannot see/modify anything in these channels.  
In practice, as there is a growing number of independent and competing cloud providers, it is feasible to find a small number of such servers. 
We leave extensions of detecting malicious adversaries as future work.


\subsection{Design overview}
\label{subsec:architecture}

Fig.~\ref{figure:architecture} shows an overview of \sysname design, which has two main components: the \emph{language front-end} and the \emph{computation engine back-end}.
The front-end provides programming interfaces and code optimizations. The back-end performs the secret-sharing-based privacy-preserving computation.  We discuss our key design rationals in this section.  

\begin{figure}[tb]
	\centering
	\includegraphics[width = 0.45\textwidth]{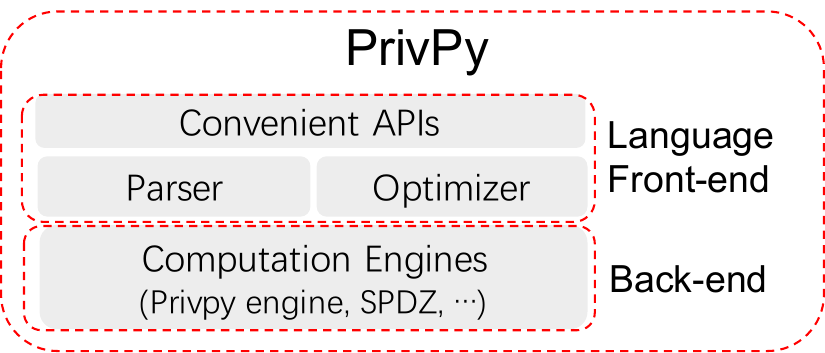}
	\caption{The overview of \sysname architecture.}\label{figure:architecture}
    \vspace{-0.15in}
\end{figure}

\para{Decoupling the frontend with backend.  }
We decouple the front-end and back-end using an extensible interface. The major benefit is that we can adapt to multiple language and backend techniques.  While we believe Python is a natural choice for the frontend for the data mining community, we support multiple MPC backends to allow tradeoffs among different security assumptions and performance.  Our interface between frontend and backend is extensible.  The basic interface only require scaler data types and operations, making it possible to use very simple engines.  Then we add extension interfaces to fully utilize backends with performance-optimizations such as array types and complex computation functions (e.g. vector outer product).  It is an analogous to the extensible instruction set architecture (ISA) design.  In fact, we currently support three backends: our own backend, SPDZ~\cite{SPDZ}, ABY3~\cite{ABY3}.  

\para{Focus on performance optimizations for the entire algorithm.  }
Performance is the key to enable scalable data mining tasks.  We optimize performance at three different levels: 1) optimize single operation performance using the 2-out-of-4 secret sharing protocols; 2) batch up operations whenever possible; 3) perform language-level optimizations in the language frontend.  

\para{Based on 2-out-of-4 secret sharing protocol. }
Handling collaborative data mining tasks over multiple parties has some two unique performance challenge: 1) the computation might happen on wide area networks (WANs), and thus it is both bandwidth and latency sensitive; and 2) there are vast amount of data, and thus we need to minimize the overall computation, including the preprocessing.  Existing engines either require multiple rounds of communication and thus perform poorly in WANs (\cite{SHAREMIND, SPDZ}), or require significant amount of pre-computation (\cite{SECUREML, ABY3}).  

We design a 2-out-of-4 secret sharing protocol combining the ideas in SecureML~\cite{SECUREML} and ABY3~\cite{ABY3}.  By adding a fourth server, we can eliminate the pre-computation in ABY3, but keep its one-round only online communication feature for fixed-point multiplication while preserving the same online communication complexity.  Also, the correctness proof of our protocol directly follows ~\cite{ABY3} and ~\cite{SECUREML}, making it simple to establish the correctness. 

\para{Hierarchical private operations (POs).  }
We call operations on private variable \emph{private operations (POs)}.  Table~\ref{table:operations} provides an overview of different POs implemented in \sysname.  

\begin{table}[tb]
	\centering
    \footnotesize
	\begin{tabular}{|c|c|c|c|c|}
        \hline
        type & \multicolumn{4}{c|}{operations} \\
        \hline
        basic & add/sub & multiplication & \tabincell{c}{oblivious\\ transfer } & \tabincell{c}{bit \\ extraction} \\
		\hline
        \multirow{2}{*}{\tabincell{c}{derived}} & comparison & sigmoid & relu & division \\
        \cline{2-5}
        & log & exp & sqrt & abs \\
        \hline
        \multirow{11}{*}{\tabincell{c}{ndarray}} & all & any & append & argmax \\
        \cline{2-5}
        & argmin & argparition & argsort & clip \\
        \cline{2-5}
        & compress & copy & cumprod & cumsum \\
        \cline{2-5}
        & diag & dot & fill & flatten \\
        \cline{2-5}
        & item & itemset & max & mean \\
        \cline{2-5}
        & min & ones & outer & partition \\
        \cline{2-5}
        & prod & ptp & put & ravel \\
        \cline{2-5}
        & repeat & reshape & resize & searchsorted \\
        \cline{2-5}
        & sort & squeeze & std & sum \\
        \cline{2-5}
        & swapaxes & take & tile & trace \\
        \cline{2-5}
        & transpose & var & zeros &  \\
        \hline
	\end{tabular}
	\caption{Supported operations of the \sysname front-end.}
	\label{table:operations}
\end{table}

We identified a number of POs that are either essential for computation or performance critical, and we implement them directly in secret sharing.   We call them basic POs. Limited by space, we only introduce the fixed-point number multiplication PO in Section~\ref{section:multiplication}.  
Another set of POs we implement is to support vector and matrix operations.  It is essential to support the array types in Python. 

One good feature of the 2-out-of-4 secret sharing is that the result of the computation is still secret shares with exactly the same format.  Thus, we can concatenate different POs together and implement derived POs.  Note that even derived POs performs better than a Python-library as it is pre-compiled in the engine - much like the built-in routines in a database system.

\para{Using frontend to provide both flexibility and performance.  }
PrivPy frontend not only makes it easy for data scientists to write in familiar Python, but also it provide extensive optimizations to support array types, including arbitrary sized arrays and operations.  Also, it automatically performs code rewriting to help programmers to avoid common performance pit-falls.  

\para{Engine architecture.  }
We use four (semi-honest) servers to implement the 2-out-of-4 secret sharing protocol above in the \sysname engine: \server{1}, \server{2}, \server{a} and \server{b}.
We adopt a client/server model, just like many existing MPC systems~\cite{IMPROVED_3PC, PEM, P4P, SHAREMIND, OBLIVM}.
Clients send secretly shared data to the servers, then the servers perform privacy-preserving computation on these shares (see Fig.~\ref{figure:engine}).

Each of the four servers has two subsystems.
The \emph{secret sharing storage (SS store) subsystem} provides (temporary) storage of shares of private inputs and intermediate results. 
while the \emph{private operation (PO) subsystem} provides an execution environment for private operations.  
The servers read shares from the SS store, execute a PO, and write the shares of the result back to the SS store. 
Thus we can compose multiple POs to form a larger PO or a complex algorithm.

\begin{figure}
	\centering
	\includegraphics[width = 0.35\textwidth]{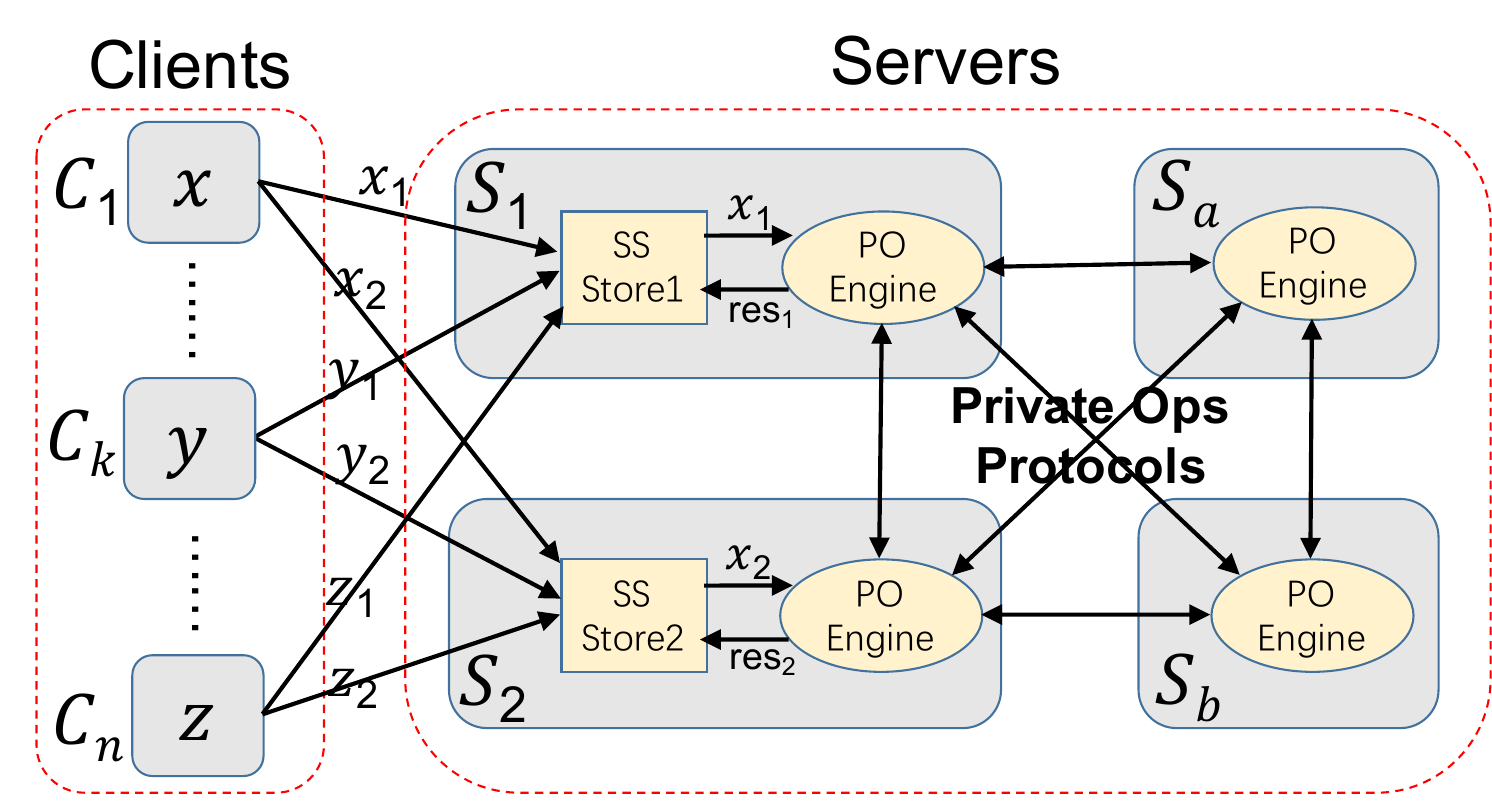}
	\caption{The overview of \sysname computation engine.}\label{figure:engine}
    \vspace{-0.15in}
\end{figure}

\para{Task execution.  }
In summary, PrivPy runs a machine learning tasks in the following four steps:
1) the Python front-end analyzes and rewrites the program for algorithm-level performance optimization.
2) Each client computes the secret shares for her private variables, and sends the resulting shares to the servers.  
3) All servers runs the Python code in parallel on the private shares without any client involvement, until it reached the \code{reveal()} point in the code. 
4) The servers invoke \code{reveal()}, the clients are notified to find the \code{result} shares, and finally recover the cleartext result.

\section{The \sysname computation engine}
\label{section:engine_overview}
In this section, using multiplication as an example, we introduce our secret sharing protocol design, and then we describe the private operation (POs) we support.
The thread model (security assumptions) is defined in \sect{section:preliminaries}.

\subsection{Replicated 2-out-of-4 secret sharing}
\label{section:secret_sharing}

\para{Secret Sharing.  }
Secret sharing encodes a secret number into multiple shares and distributes the shares to a group of participants such that no information about the raw number is revealed as long as no sufficient is gathered. The simplest secret sharing is to encode a number $x$ into two numbers $r$ and $x - r$ where $r$ is a random number. Thus one can reconstruct $x$ only if he gets both shares.

\para{Replicated 2-out-of-4 secret sharing.  }
We combine the thought of SecureML~\cite{SECUREML} and ABY3~\cite{ABY3}, and design a more efficient protocol for fixed-point multiplication.

We define a 2-out-of-4 secret sharing, denoted as $\binom{4}{2}$-sharing, to enable efficient arithmetics.
And we consider all the shares are in a $\mathbb{Z}_{2^n}$ ring.
Concretely, to share an integer $x \mod 2^n$, we encode it as a tuple of shares: $\share{x} = (x_1, x_1', x_2, x_2', x_a, x_a', x_b, x_b')$.
\server{1} holds $(x_1, x_1')$ where $x_1$ and $x_1'$ are two independent random integers; 
\server{2} holds $(x_2, x_2')$ where $x_2 = x - x_1$ and $x_2' = x - x_1'$; 
\server{a} holds $(x_a, x_a')$ where $x_a = x_2$ and $x_a'$ = $x_1'$;
\server{b} holds $(x_b, x_b')$ where $x_b = x_1$ and $x_b'$ = $x_2'$.
It can be easily seen that the two numbers each server holds are independently uniformlly distributed in the ring $\mathbb{Z}_{2^n}$, thus the non-colluding servers learns nothing about $x$. 
Note that all the linear arithmetic operations of the secret shares, such as $+, -, \times$, are over the ring $\mathbb{Z}_{2^n}$.
The division operation $x / 2^d$ stands for shifting the bits of $x$ right in the two's-complement representation.


\para{Sharing initialization. }
To share a number $x$, a client encodes it to $x_1$ and $x_2$ where $x_1$ is randomly sampled in $\mathbb{Z}_{2^n}$ and $x_2 = x - x_1$.
Then it sends $x_1$ and $x_2$ to \server{1} and \server{2} respectively.
After receiving the shares, \server{1} sends $x_1$ to \server{b}, and \server{2} sends $x_2$ to \server{a}. Thus the four servers holds $x_1, x_2, x_a, x_b$ respectively.
Meanwhile, \server{1} and \server{2} generate a random number $r$ using the same seed, and calculate $x_1' = x_1 - r$ and $x_2' = x_2 + r$ respectively.
Finally, \server{1} sends $x_1'$ to \server{a} while \server{2} sends $x_2'$ to \server{b}, and the servers get the $\binom{4}{2}$-sharing of $x$.
In this situation, each server only sees two independent random integers that are uniformly distributed in $\mathbb{Z}_{2^n}$, and no information about $x$ is revealed to each server.

\para{Free addition. } It can be easily seen that the $\binom{4}{2}$-sharing over $\mathbb{Z}_{2^n}$ is additively homomorphic, i.e., $\share{x} + \share{y} = \share{x + y}$, and the result shares still satisfy the above constraints.
Thus each server can locally compute the share of the sum.
Similarly, for shared bits over $\mathbb{Z}_{2}$, the \code{XOR} operation is also free.

\para{Supporting decimals. }
We map a decimal $x$ to $\mathbb{Z}_{2^n}$ as a fixed-point number: we scale it by a factor of $2^d$ and represent the rounded integer $\map{x} = \lfloor 2^dx \rfloor$ as a $n$-bit integer using the two's-complement reprensentation.
This mapping scheme is commonly used (see \cite{SECUREML, ABY3, FIXED_POINT, IMPROVED_FIXED}, and the precision is $2^{-d}$.
It is obvious that, while ignoring the precision loss, $\share{\map{x}}$ remains additively homomorphic.

\subsection{Fixed-point multiplication. }
\label{section:multiplication}
To support efficient fixed-point multiplication, we combine the thought of SecureML~\cite{SECUREML} and ABY3~\cite{ABY3} which are also secret-sharing based approaches. 
But compared with SecureML and ABY3, our fixed-point multiplication does not require precomputation and only needs 1 round of communication, and preserves the same online communication complexity for each server when ultilizing \emph{fully-duplex} communication, as Protocol~\ref{protocol:multiplication} shows.
A remarkable thing is that, each pair of servers share a random string and use the string as the seed of a pseudorandom function, like \cite{IMPROVED_3PC, ABY3, GC_3PC} do, thus they can get a same random number without communication. The thought of this protocol is from \cite{SECUREML, ABY3, IMPROVED_3PC} and the security is similar with them. Thus we omit the proof detail here due to space limitation. To argue the correctness, we observe that
\begin{align*}
    & 2^dz_1 + 2^dz_2 \\ 
    &= (x_1y_1' - r_{12} + x_by_b' + r_{ab}) + (x_2y_2' + r_{12} + x_ay_a' - r_{ab}) \\
    &= x_1y_1' + x_by_b' + x_2y_2' + x_2y_a' \\
    &= x_1y_1' + x_1y_2' + x_2y_2' + x_2y_1' \\
    &= (x_1 + x_2)(y_1' + y_2') = xy
\end{align*}
This means that $(2^dz_1, 2^dz_2)$, namely $(t_1 + t_b, t_2 + t_a)$, is a two-party share of $xy$. 
Thus, according to the theorem in \cite{SECUREML}, $(z_1, z_2)$ is a two-party share of $xy/2^d$  i.e. $z_1 + z_2 = xy/2^d$ with extremely high probability.
The same applies to $z_1'$ and $z_2'$, i.e. $z_1' + z_2' = xy/2^d$.
Also, we can see that $z_1 = z_b$, $z_2 = z_a$, $z_1' = z_a'$ and $z_2' = z_b'$. 
This means that the result shares of $z$ still satisfy the constraint of $\binom{4}{2}$-sharing.

\begin{protocol}[tb]
\DontPrintSemicolon
\KwInput{Shares of two fixed-point values $\share{x}$ and $\share{y}$}
\KwOutput{$\share{z}$ where $z = xy / 2^d$}
\KwStep{
\begin{enumerate}[a)]
    \item \server{1} generates $r_{12}$ and $r_{12}'$ and \\
                     calculates $t_1 = x_1y_1' - r_{12}$ and $t_1' = x_1'y_1 - r_{12}'$. \\ 
                     Then it sends $t_1$ to \server{b} and sends $t_1'$ to \server{a}.
    \item \server{2} generates $r_{12}$ and $r_{12}'$ and \\
                     calculates $t_2 = x_2y_2' + r_{12}$ and $t_2' = x_2'y_2 + r_{12}'$. \\ 
                     Then it sends $t_2$ to \server{a} and sends $t_2'$ to \server{b}.
    \item \server{a} generates $r_{ab}$ and $r_{ab}'$ and \\
                     calculates $t_a = x_ay_a' - r_{ab}$ and $t_a' = x_a'y_a - r_{ab}'$. \\ 
                     Then it sends $t_a$ to \server{2} and sends $t_a'$ to \server{1}.
    \item \server{b} generates $r_{ab}$ and $r_{ab}'$ and \\
                     calculates $t_b = x_by_b' + r_{ab}$ and $t_b' = x_b'y_b + r_{ab}'$. \\ 
                     Then it sends $t_b$ to \server{1} and sends $t_b'$ to \server{2}.
    \item \server{1} sets $z_1 = (t_1 + t_b) / 2^d$ and $z_1' = (t_1' + t_a') / 2^d$; \\
          \server{2} sets $z_2 = (t_2 + t_a) / 2^d$ and $z_2' = (t_2' + t_b') / 2^d$; \\
          \server{a} sets $z_a = (t_a + t_2) / 2^d$ and $z_a' = (t_a' + t_1') / 2^d$; \\
          \server{b} sets $z_b = (t_b + t_1) / 2^d$ and $z_b' = (t_b' + t_2') / 2^d$.
\end{enumerate}
}
\caption{Fixed-point multiplication protocol.}
\label{protocol:multiplication}
\end{protocol}

\subsection{Bit extraction}
Although machine learning tasks seldomly employ bitwise operations directly, some non-linear operations such as comparison can be evaluated efficiently by implementing them using bit extraction: to compare two shared numbers $\share{x}$ and $\share{y}$ in the ring $\mathbb{Z}_{2^n}$, we first calcuate $\share{z} = \share{x} - \share{y}$ and then extract the bit share $\share{c}^B$ where $c$ is the most significant bit of $z$. As we use two's-complement reprensentation, $c = 1$ means $z$ is negative (i.e. $x < y$).
Garbled circuit~\cite{YAO_PROTOCOL}, though is efficient for boolean circuits (e.g. comparison) and requires constant rounds, usually causes high communication cost, especially when the data size is large and the throughput is the main concern, which is the case of machine learning tasks.

A direct way of bit extraction is to first convert each share (i.e. $x_1, x_1', x_2, \dots$) of $x$ to bit representation and then perform addition circuits to get the bit representation of $x$, like \cite{ABY3} does.
However, with $\binom{4}{2}$ sharing, each server should send or receive at least 1 message to convert each share to the bit representation.

To optimize the bit extraction, we observe that it is not necessary to evaluate the whole addition circuit, as we only need to get the final bit and there is no need to output some intermediate results. Protocol~\ref{protocol:bit_extraction} shows this. $x[i]$ means the $i$-th bit of $x$, while $x[1:k]$ means the first $k$ bits of $x$.
To extract the $k$-th bit of $x$, we regard $x_1$ and $x_2$ as two bit arrays, each of which consists of $n$ bits, and perform an addition circuit on them, then we use the 1-bit adder in \cite{IMPROVED_GC} to calculate the carry bits: $c[i + 1] = (x_1[i] \oplus c[i]) \wedge (x_2[i] \oplus c[i]) \oplus c[i]$. 
To see how Protocol~\ref{protocol:bit_extraction} gets $c[i+1]$, we first see from step $a$-$c$ that $u_2 \oplus u_a = x_1[1:k] \wedge x_2[1:k]$.
Then in step $h$, we get 
\begin{align*}
    &  c[i+1]_1' \oplus c[i+1]_2' \\
    &= ((x_1[i] \wedge c[i]_1' \oplus b_{12}) \oplus (x_a[i] \wedge c[i]_a' \oplus u_a[i] \oplus b_{ab})) \oplus \\ 
      &\ \ \ \ \  ((x_2[i] \wedge c[i]_2' \oplus u_2[i] \oplus b_{12}) \oplus (x_b[i] \wedge c[i]_b' \oplus b_{ab})) \\
    &= (x_1[i] \wedge c[i]') \oplus (x_2[i] \wedge c[i]') \oplus (x_1[i] \wedge x_2[i]) \\ 
    &= (x_1[i] \oplus c[i]) \wedge (x_2[i] \oplus c[i]) \oplus c[i] = c[i+1]
\end{align*}
Note that in step $d$-$h$, we only use $c_1', c_2', c_a'$ and $c_b'$. Thus we can get each carry bit without calculating $c_1, c_2, c_a$ and $c_b$.
Finally, step $i$-$m$ calculate the output bit as $c_[k] = c[k-1] \oplus x_1[k-1] \oplus x_2[k-1]$ and reconstruct the shares to $\binom{4}{2}$-sharing.

The security of this protocol is similar with Protocol~\ref{protocol:multiplication}, as each server only receives independent random bits at each step.
Thus if we perform the addition circuit by chaining the 1-bit adders together, each server sends no more than $2k$ bits in total, which is the same as the bit extraction protocol in \cite{ABY3}.
We can further optimize the communiication complexity by letting \server{1} and \server{b} share a half of the shares of $x_1[1:k] \wedge x_2[1:k]$ in step $a$-$c$. Thus each server only needs to send $1.5k$ bits in total.
Another optimization we can use is to employ a parallel prefix adder (PPA)~\cite{PPA} which uses a divide and conquer strategy and reduces the total number of rounds to $O(\log k)$, like \cite{ABY3} does.

\begin{protocol}[tb]
\DontPrintSemicolon
\KwInput{$\share{x}$ and $k$}
\KwOutput{$\share{c}$ where $c$ is $k$-th bit of $x$}
\KwStep{
Each server initilizes $c$ as 0. Then run as follows:
\begin{enumerate}[a)]
    \item \server{1} generates $r_{12}$ in $\mathbb{Z}_{2^k}$, and calcuates \\ $u_1 = x_1[1:k] \oplus r_{12}$. 
        Then it sends $u_1$ to \server{a}. 
    \item \server{2} generates $r_{12}$ in $\mathbb{Z}_{2^k}$ and sets $u_2 = x_2[1:k] \wedge r_{12}$.
    \item \server{a} sets $u_a = u_1 \wedge x_a$.
\end{enumerate}
For $(i = 1; i <= k - 1; i = i + 1) \{$
\begin{enumerate}[a)]
    \item \server{1} generates a random bit $b_{12}$ and calculates $t_1' = x_1[i] \wedge c_1' \oplus b_{12}$. 
        Then it sends $t_1'$ to \server{a}.
    \item \server{2} generates a random bit $b_{12}$ and calculates $t_2' = x_2[i] \wedge c_2' \oplus u_2[i] \oplus b_{12}$. 
        Then it sends $t_2'$ to \server{b}.
    \item \server{a} generates a random bit $b_{ab}$ and calculates $t_a' = x_a[i] \wedge c_a' \oplus u_a[i] \oplus b_{ab}$. 
        Then it sends $t_a'$ to \server{1}.
    \item \server{b} generates a random bit $b_{ab}$ and calculates $t_b' = x_b[i] \wedge c_b' \oplus b_{ab}$. 
        Then it sends $t_b'$ to \server{2}.
    \item \server{1} sets $c_1' = t_1' \oplus t_a'$; 
          \server{2} sets $c_2' = t_2' \oplus t_b'$; 
          \server{a} sets $c_a' = t_a' \oplus t_1'$; 
          \server{b} sets $c_b' = t_b' \oplus t_2'$. 
\end{enumerate}
$\}$
\begin{enumerate}[a)]
    \item \server{1} generates a random bit $b_{12}$ and calculates $c_1' = x_1[k] \oplus c_1' \oplus b_{12}$. 
        Then it sends $c_1'$ to \server{a}.
    \item \server{2} generates a random bit $b_{12}$ and calculates $c_2' = x_2[k] \oplus c_2' \oplus b_{12}$. 
        Then it sends $c_2'$ to \server{b}.
    \item \server{a} generates a random bit $b_{ab}$ and calculates $c_a = x_a[k] \oplus c_a' \oplus b_{ab}$. 
        Then it sends $c_a$ to \server{2}.
    \item \server{b} generates a random bit $b_{ab}$ and calculates $c_b = x_b[k] \oplus c_b' \oplus b_{ab}$. 
        Then it sends $c_b$ to \server{1}.
    \item \server{1} sets $c_1 = c_b$; 
          \server{2} sets $c_2 = c_a$;
          \server{a} sets $c_a' = c_1'$; \\
          \server{b} sets $c_b' = c_2'$;
\end{enumerate}
}
\caption{Bit extraction protocol.}
\label{protocol:bit_extraction}
\end{protocol}

\subsection{OT protocol}
Oblivious transfer (OT)~\cite{OT} enables oblivious selection between two numbers without revealing the index and the private numbers: given a shared bit $\share{c}^B$ and two shared numbers $\share{x}$ and $\share{y}$, OT outputs $\share{x}$ if $c=1$, and outputs $\share{y}$ otherwise.
Note that this statement is equivalent to another one: given a shared bit $\share{c}^B$ and a shared number $\share{\delta}$, OT outputs $\share{x}$ if $c=1$, and outputs $\share{0}$ otherwise. This is because we can define $\delta = y - x$ in the first statement and add the OT output to $\share{x}$.

With efficient OT protocols, many common machine learning components, such as ReLu and piecewise functions, can be evaluated efficiently.
Although the goal is to output $cx$, we cannot perform OT using Protocol~\ref{protocol:multiplication} directly, as the bit share is over $\mathbb{Z}_2$ while the integer share is over $\mathbb{Z}_{2^n}$.
To construct our OT protocol using $\binom{4}{2}$-sharing, we first convert the \code{XOR} opeartion into addition over $\mathbb{Z}_{2^n}$, then perform arithmetical computation on the shares. Specifically, we have:
\begin{align*}
    cx &= (c_1' \oplus c_2')(x_1 + x_2) \\
       &= (c_1' + c_2' - 2c_1'c_2')(x_1 + x_2) \\ 
       &= c_1'x_1 + (1-2c_1')c_2'x_1 + c_2'x_2 + (1-2c_2')c_1'x_2 \\
       &= c_1'x_1 + (1-2c_1')c_b'x_b + c_2'x_2 + (1-2c_2')c_a'x_a
\end{align*}
In the above formulation, $c_1'x_1, c_2'x_2, c_a'x_a$ and $c_b'x_b$ can be computed locally, thus the servers only need to introduce $(1-2c_1')$ and $(1-2c_2')$ into the result.
Protocol~\ref{protocol:ot} shows our 4-party OT scheme.
The analysis of correctness and security are similar as above. The protocol runs in 1 round and each server needs to send 4 messages.

\begin{protocol}[tb]
\DontPrintSemicolon
\KwInput{$\share{x}$ over $\mathbb{Z}_{2^n}$ and $\share{c}$ over $\mathbb{Z}_2$}
\KwOutput{$\share{y}$ where $y = x$ if $c = 1$, and $y=0$ otherwise. }
\KwStep{
\begin{enumerate}[a)]
    \item \server{1} generates random numbers $r_{12}, r_{1b}, r_{12}'$ and $r_{1a}'$ in $\mathbb{Z}_{2^n}$, and calculates 
        $t_1 = c_1'x_1 - r_{12}, t_1' = c_1x_1' - r_{12}', e_1 = (1 - 2c_1')r_{12} + r_{1b}, e_1' = (1 - 2c_1)r_{12}' + r_{1a}'$. \\
        Then it sends $t_1'$ and $e_1$ to \server{a}, and sends $t_1$ and $e_1'$ to \server{b}.
    \item \server{2} generates random numbers $r_{12}, r_{2a}, r_{12}'$ and $r_{2b}'$ in $\mathbb{Z}_{2^n}$, and calculates 
        $t_2 = c_2'x_2 - r_{12}, t_2' = c_2x_2' - r_{12}', e_2 = (1 - 2c_2')r_{12} + r_{2a}, e_2' = (1 - 2c_2)r_{12}' + r_{2b}'$. \\
        Then it sends $t_2'$ and $e_2$ to \server{b}, and sends $t_2$ and $e_2'$ to \server{a}.
    \item \server{a} generates random numbers $r_{ab}, r_{2a}, r_{ab}'$ and $r_{1a}'$ in $\mathbb{Z}_{2^n}$, and calculates 
        $t_a = c_a'x_a - r_{ab}, t_a' = c_ax_a' - r_{ab}', e_a = (1 - 2c_a')r_{ab} + r_{2a}, e_a' = (1 - 2c_a)r_{ab}' + r_{1a}'$. \\
        Then it sends $t_a'$ and $e_a$ to \server{1}, and sends $t_a$ and $e_a'$ to \server{2}.
    \item \server{b} generates random numbers $r_{ab}, r_{1b}, r_{ab}'$ and $r_{2b}'$ in $\mathbb{Z}_{2^n}$, and calculates 
        $t_b = c_b'x_b - r_{ab}, t_b' = c_bx_b' - r_{ab}', e_b = (1 - 2c_b')r_{ab} + r_{1b}, e_b' = (1 - 2c_b)r_{ab}' + r_{2b}'$. \\
        Then it sends $t_b'$ and $e_b$ to \server{2}, and sends $t_b$ and $e_b'$ to \server{1}.
    \item \server{1} sets $y_1 = (1 - 2c_1')t_b + c_1'x_1 + e_a - r_{1b}$ and $y_1' = (1 - 2c_1)t_a' + c_1x_1' + e_b' - r_{1a}'$; \\
          \server{2} sets $y_2 = (1 - 2c_2')t_a + c_2'x_2 + e_b - r_{2a}$ and $y_2' = (1 - 2c_2)t_b' + c_2x_2' + e_a' - r_{2b}'$; \\
          \server{a} sets $y_a = (1 - 2c_a')t_2 + c_a'x_a + e_1 - r_{2a}$ and $y_a' = (1 - 2c_a)t_1' + c_ax_a' + e_2' - r_{1a}'$; \\
          \server{b} sets $y_b = (1 - 2c_b')t_1 + c_b'x_b + e_2 - r_{1b}$ and $y_b' = (1 - 2c_b)t_2' + c_bx_b' + e_1' - r_{2b}'$;
\end{enumerate}
}
\caption{4-party OT protocol.}
\label{protocol:ot}
\end{protocol}

\subsection{Other basic POs}
Limited by space, we only briefly introduce other POs here.  
We implement the \emph{comparison} operation using oblivious transfer which enables secure selection between two private numbers and bit extraction which extracts a specific bit of a private number, in a similar way as \cite{ABY3}, and thus the correctness and security directly follow \cite{ABY3}.
We also implement two basic \emph{bitwise operations} \code{XOR} and \code{AND}, and we can get all kinds of bitwise operations by composing the two ones. Actually, \code{XOR} is the addition modulo 2, while \code{AND} is the multiplication modulo 2. Thus if we use use $\binom{4}{2}$-sharing to represent a bit, we can implement \code{XOR} and \code{AND} similar to addition and multiplication of integer operations in $\mathbb{Z}_{2^n}$.

\subsection{Derived POs}
\label{subsection:other_operation}
We can compose multiple basic POs and form more complex derived POs commonly used in machine learning algorithms.
For example, to compute a \emph{ReLu} function $f(x) = \max(0, x)$ which is commonly used as a activation function in neural networks, we can first extract the most significant bit of $-x$ (which indicates $x$ is positive or not), then use the OT protocol to get $f(x)$.
For \emph{division}, we can use the Newton-Raphson algorithm~\cite{NEWTON_RAPSHON} to approximate the result.
Similarly, to implement the logistic function $f(x) = \frac{1}{1+e^{-x}}$, we can use the Euler method~\cite{EULER_METHOD}. Another alternative implementation for the logistic function is the piecewise function in \cite{SECUREML, ABY3}.
We also implement other common maths functions using similar numerical methods, such as \emph{sqrt}, \emph{log}, \emph{exp} and \emph{max\_pooling} in a similar way.
With these basic POs and derived POs, we can further implement complex algorithms as usual.

\subsection{POs for performance optimization}
\label{section:building_block_optimization}
\label{subsection:optimization_batch}

We provide the following three sets of POs whose functionality is already covered by the basic POs, but the separate versions can significantly improve performance in certain cases.  Programmers can use these POs directly. 

\para{Batching POs.}
\emph{Batch up} is a commonly used optimization in MPC frameworks~\cite{SHAREMIND, SPDZ, PICCO}, which batches up independent data transfers among the servers and thus reduce the fixed overhead.
Array POs natively support batch up. 
And as many machine learning algorithms heavily utilize array operations, this optimization reduces communication rounds and can improve performance significantly. 

\para{Multiply by public variables.  }
In a case where an operation involves both public and private variables, we can optimize performance by revealing the public variables.
Multiplication benefits from the optimization the most, as
the servers only need to multiply their shares by the public variables directly and there is no necessary communication.

\para{Dot and outer product. }
Dot and outer product of matrices are frequently used in common machine learning algorithms.
For example, logistic regression and neural networks use dot product for forward propagation, represented as $Y = W \cdot X + b$.
Outer product is often used for calculating gradients.
While implementing them using for-loops, there are too many duplicated transfers for each element, as each element will be multiplied by several other elements in a multi-dimensional situation.
We thus provide built-in optimized dot and outer product. Specifically, for two private matrices $\share{A}$ and $\share{B}$, we can calculate the dot product as $\share{A} \cdot \share{B} = A_1 \cdot B_1' + A_2 \cdot B_2' + A_a \cdot B_a' + A_b \cdot B_b'$. 
This optimization significantly reduces communication cost. As an example, given two $n \times n$ matrices, a for-loop for dot product triggers $n^3$ multiplications and the communication complexity is $O(n^3)$, while the optimized one only incurs communication complexity of $O(n^2)$.

\section{Front-end and Optimizations}
\label{section:optimizations}

We now introduce the design and implementation of the programming interfaces.  
\alert{Our goal is to provide intuitive interfaces and automatic optimizations to avoid steep learning curves and enable programmers to focus on the machine learning algorithm itself.}

\subsection{PrivPy Front-end Features}

A \sysname program is a valid Python program with NumPy-style data type definitions. We use three real code segments to illustrate the PrivPy features essential to implementing data mining algorithms.  

Fig.~\ref{fig:code_example} shows a \sysname program that computes the logistic function $f(x) = 1/(1+e^{-x})$ using the Euler method~\cite{EULER_METHOD}.  Fig.~\ref{fig:sample_code_mf} shows an extra example of matrix factorization, which decomposes a large private matrix $x$ to two latent matrices $P$ and $Q$.  Lastly, Fig.~\ref{fig:sample_code_nn} shows an example of neural network inference.

\begin{figure}
    \begin{lstlisting}[language=Python]
    x = privpy.ss(clientID)
    def logistic(x, start, iter_cnt): 
      result = 1.0 / (1 + math.exp(-start)) 
      deltaX = (x - start) / iter_cnt 
      for i in range(iter_cnt):
        derivate = result * (1 - result) 
        result += deltaX * derivate 
      return result
    result = logistic(x, 0, 100) # main()
    result.reveal()
    \end{lstlisting}
    \caption{Example \sysname code: logistic function.}
    \label{fig:code_example}
\end{figure}

\begin{figure}[tb]
	\begin{lstlisting}[language=Python, numbers=none]
	import privpy as pp
	x = ... # read data using ss()
	factor,gamma,lamb,iter_cnt = initPublicParameters()
	n,d = x.shape
	P = pp.random.random((n,factor))
	Q = pp.random.random((d,factor))
	for _  in range(iter_cnt):
	  e = x - pp.dot(P,pp.transpose(Q))
	  P1 = pp.reshape(pp.repeat(P,d,axis=0),P.shape[:-1] + (d,P.shape[-1]))
	  e1 = pp.reshape(pp.repeat(e,factor,axis=1),e.shape + (factor,))
	  Q1 = pp.reshape(pp.tile(Q,(n,1)),(n,d,factor))
	  Q += pp.sum(gamma * (e1 * P1 - lamb * Q1),axis = 0)/n
	  Q1 = pp.reshape(pp.tile(Q,(n,1)),(n,d,factor))
	  P += pp.sum(gamma * (e1 * Q1 - lamb * P1),axis = 1)/d
	P.reveal(); Q.reveal()
	\end{lstlisting}
	\caption{Example \sysname code: matrix factorization.}
	\label{fig:sample_code_mf}
\end{figure}

\begin{figure}[tb]
	\begin{lstlisting}[language=Python, numbers=none]
	import privpy as pp
	x = ... # read data using ss()
	W, b = ... # read model using ss()
	for i in range(len(W)):
	  x = pp.dot(W.T, x) + b
	  x = pp.relu(x)
	res = pp.argmax(x, axis=1)
	res.reveal()
	\end{lstlisting}
	\caption{Example \sysname code: neural network inference.}
	\label{fig:sample_code_nn}
\end{figure}

\para{Basic semantics.  }
Unlike many domain-specific front-ends~\cite{TASTY, L1, SHAREMIND}, which require the programmers to have knowledge about cryptography and use customized languages, 
the program itself (lines 2-9) is a plain Python program, which can run in a raw Python environment with cleartext input, and the user only needs to add two things to make it private-preserving in \sysname: 
$(i)$ Declaring the private variables.
Line 1 declares a private variable \code{x} as the input from the client $clientID$ using the \code{ss} function. 
$(ii)$ Getting results back.  
The function \code{reveal} in line 10 allows clients to recover the cleartext of the private variable.
Programmers not familiar with cryptography, such as machine learning programmers, can thus implement algorithms with minimal effort.

\para{All operations support both scalar and array types.  }
\sysname supports scalars, as well as arrays of any shape.
\alert{Supporting array operations is essential for writing and optimizing machine learning algorithms which rely heavily on arrays.}
While invoking the \code{ss} method, \sysname detects the type and the shape of \code{x} automatically.
If \code{x} is an array, the program returns an array of the same shape, containing the function on every element in \code{x}.
Following the NumPy~\cite{NUMPY} semantics, we also provide \emph{broadcasting} that allows operations between a scalar and an array, as well as between arrays of different shapes, two widely used idioms. 
That is why the \code{logistic} function in Fig.~\ref{fig:code_example} works correctly even when \code{x} is a private array. 
As far as we know, existing MPC front-ends, such as \cite{SHAREMIND, L1, PICCO, SPDZ}, do not support such elegant program. 
For example, PICCO~\cite{PICCO} only supports operations for arrays of equal shape.


\para{Private array types.  }
Array operations are pretty common in machine learning algorithms.
The private array class in \sysname encapsulates arrays of any shape.  Users only need to pass a private array to the constructor, then the constructor automatically detects the shape. Like the array type in Numpy~\cite{NUMPY}, our private array supports \emph{broadcasting}, i.e. \sysname can handle arithmetic operations with arrays of different shapes by ``broadcasting'' the smaller arrays (see \cite{BROADCASTING} for details). 
For example, given a scalar $x$, a $4 \times 3$ array $A$, a $2 \times 4 \times 3$ array $B$ and a $2 \times 1 \times 3$ array $C$, the expressions $x \bigodot A$, $A \bigodot B$ and $B \bigodot C$ are all legal in \sysname, where $\bigodot$ can be $+, \times \text{and} >$ etc. Note that in \sysname, the above variables can be either public or private.
With broadcasting, programmers can write elegant machine learning algorithms regardless of the shapes of the inputs and model parameters.

We also implement most of the ndarray methods of Numpy, with which application programmers can manipulate arrays conveniently and efficiently, except for the methods related with IO (we leave IO as the future work). Table~\ref{table:operations} lists the ndarray methods we have implemented (see \cite{NDARRAY} for details of \code{numpy.ndarray}).

Broadcasting and ndarray methods are essential for implementing common machine learning algorithms which usually handle arrays of different shapes.

Both Fig.~\ref{fig:sample_code_mf} and Fig.~\ref{fig:sample_code_nn} demonstrate \emph{ndarray} methods in \sysname. Users can implement the algorithms in plain Python, then just replace the \emph{Numpy} package with \emph{\sysname} package and add private variables declaration. Actually, by replacing all \code{privpy} with \code{numpy}, the main parts of Fig.~\ref{fig:sample_code_mf} and Fig.~\ref{fig:sample_code_nn} can run directly in raw Python environment with cleartext inputs. 

\para{Support for large arrays.  }
Mapping the data onto secret shares unavoidably increases the data size. Thus, real-world datasets that fit in memory in cleartext may fail to load in the private version.  For example, the $1,000,000 \times 5,048$ matrix require over 150GB memory.  

\para{Automatic code rewriting. }
With the program written by users, the interpreter of our front-end parses it to basic privacy-preserving operations supported by the back-end, and the optimizer automatically rewrites the program to improve efficiency (see \sect{section:optimizations} for details). 
\alert{This optimization can help programmers avoid performance ``pit falls'' in MPC situation.}

\subsection{Implementations}

\para{Based on Plain Python Interpreter. }
We write our backend in C++ for performance, and we implement our frontend in Python to keep python compatibility.  The backend is linked to the frontend as a library on each server to reduce the overhead between the frontend and backend.
During a execution task, the same Python code is interpretered on each server and client in parallel.

\para{NumPy-style data type definitions and operator overloading.  } We define our own data types \code{SNum} and \code{SArr}, to represent the secret numbers and arrays, respectively.  Then we overload operators for private data classes, so standard operators such as $+, -, *, >, =$ work on both private and public data.  The implementation of these overloaded operators chooses the right POs to use based on data types and the sizes at runtime.  

\para{Automatic disk-backed large arrays.  }
We provide a \code{LargeArray} class that transparently uses disks as the back storage for arrays too large to fit in memory. 

\subsection{Code analysis and optimization}
\label{subsection:optimization_ast}

Comparing to the computation on cleartext, private operations have very distinct cost, and many familiar programming constructs may lead to bad performance, creating ``performance pitfalls''.  Thus, we provide aggressive code analysis and rewriting to help avoid these pitfalls.  For example, it is fine to write an element-wise multiplication of two vectors in plain Python program.
\begin{lstlisting}[language=Python,numbers=none]
for i in range(n): z[i] = x[i] * y[i]
\end{lstlisting}

However, this is a typical anti-pattern causing performance overhead due to the $n$ multiplications involved, comparing to a single array operation (\sect{subsection:optimization_batch}).  
To solve the problem, we build a source code analyzer and optimizer based on Python's abstract syntax tree (AST) package~\cite{AST}.  Before the servers execute the user code, our analyzer scans the AST and rewrites anti-patterns into more efficient ones.  
In this paper, we implement three examples: 

\para{For-loops vectorization.  }
Vectorization~\cite{VECTORIZATION} is a well-known complier optimization. 
This analyzer rewrites the above for-loop into a vector form $\vec{z} = \vec{x} * \vec{y} $. 
The rewriter also generates code to initialize the vector variables. 


\para{Common factor extraction.  }
We convert expressions with pattern
$x * y_1 + x * y_2 + \dots + x * y_n$
to  
$x * (y_1 + y_2 + \dots + y_n)$. 
In this way, we reduce the number of $\times$ from $n$ to 1, saving significant communication time.


\para{Common expression vectorization.  }
Programmers often write vector expressions explicitly, like $x_1*y_1 + x_2*y_2 + \dots + x_n*y_n$, especially for short vectors.  The optimizer extracts two vectors $\vec{x} = (x_1, x_2, \dots, x_n)$ and $\vec{y} = (y_1, y_2, \dots, y_n)$, and rewrite the expression into a vector dot product of $\vec{x}\cdot \vec{y}$.
Note that $x_1, x_2, \dots, x_n$ do not have to be the same shape, as \sysname supports batch operations with mixed shapes.


\para{Reject for unsupported statements.  }
We allow users to write legal Python code that we cannot run correctly, such as branches with private conditions (actually, most MPC tools do not support private conditions~\cite{PICCO, OBLIVM}, or only support limited scenarios~\cite{PICCO, OBLIV_C}). 
In order to minimize users' surprises at runtime, we perform AST-level static checking, then reject unsupported statements at the initialization phase and terminate with an error.

\section{Evaluation}
\label{section:experiments}

\para{Testbed.}
We run our experiments on four Amazon EC2 virtual machines.  All machines are of type \code{c5.2xlarge} with 8 Intel Xeon Platinum 8000-series CPU cores and 64 GB RAM.
Each machine has a 1 GB Ethernet adapter running in full-duplex mode. 
In our experiments, we consider two network settings: a LAN setting where each virtual machine has 10Gbps incoming and outgoing bandwidth, and a WAN setting where the bandwidth of each virtual machine is 50 Mbps and the RTT latency is 100 ms.

\para{Parameter setting. }
All arithmetic shares are over $\mathbb{Z}_{2^{128}}$, and we set $d = 40$, which means the scaling factor is $2^{40}$. 
We repeat each experiment 10 times and report the average values.

\para{\sysname implementation.  }
We implement the front-end of \sysname with Python, and use C++ to implement our computation engine. 
And we use the built-in \code{\_\_int128} type of \code{gcc} to implement 128-bit integers.
We compile the C++ code using \code{g++ -O3}, and wrap it into Python code using the Boost.Python library~\cite{BOOST}.  We use SSL with 1024-bit keys to protect all communications. 

\subsection{Microbenchmarks}
We first perform microbenchmarks in the LAN setting to show the performance of basic operations and the benefits of optimizations.

\para{Basic operations. }
\sysname engine supports efficient fundamental operations, including addition, fixed-point multiplication and comparison. Addition can be done locally, while multiplication and comparison involve communication. Thus we demonstrate the performance of the latter two, as Table~\ref{table:time-op} shows.

\begin{table}
\centering
\small
\begin{tabular}{|c|c|}
\hline
\tabincell{c}{fixed-point multiplication} & comparison \\
\hline
$10,473,532$ & $128,2027$ \\
\hline
\end{tabular}
\caption{Throughput (ops/second) of fundamental operations over $\mathbb{Z}_{2^{128}}$ in the LAN setting.}
    \vspace{-0.15in}
\label{table:time-op}
\end{table}

\para{Client-server interaction. }
We evaluate the performance of the secret sharing process $ss$, with which the clients split raw data to secret shares and send them to the servers, and the reverse process $reveal$, with which the clients receive the shares from the servers and recover them to the plaintexts. We evaluate the time (including computation and communication) with different numbers of clients and dimensions, assuming that each client holds an accordingly dimensional vector. Figure~\ref{figure:time-client} shows that even with 1000 clients and 1000-dimension vectors, it takes only less 0.3 seconds for the servers to collect/reveal all the data from/to all the clients. 

\begin{figure}
	\centering
	\includegraphics[width = 0.48\textwidth]{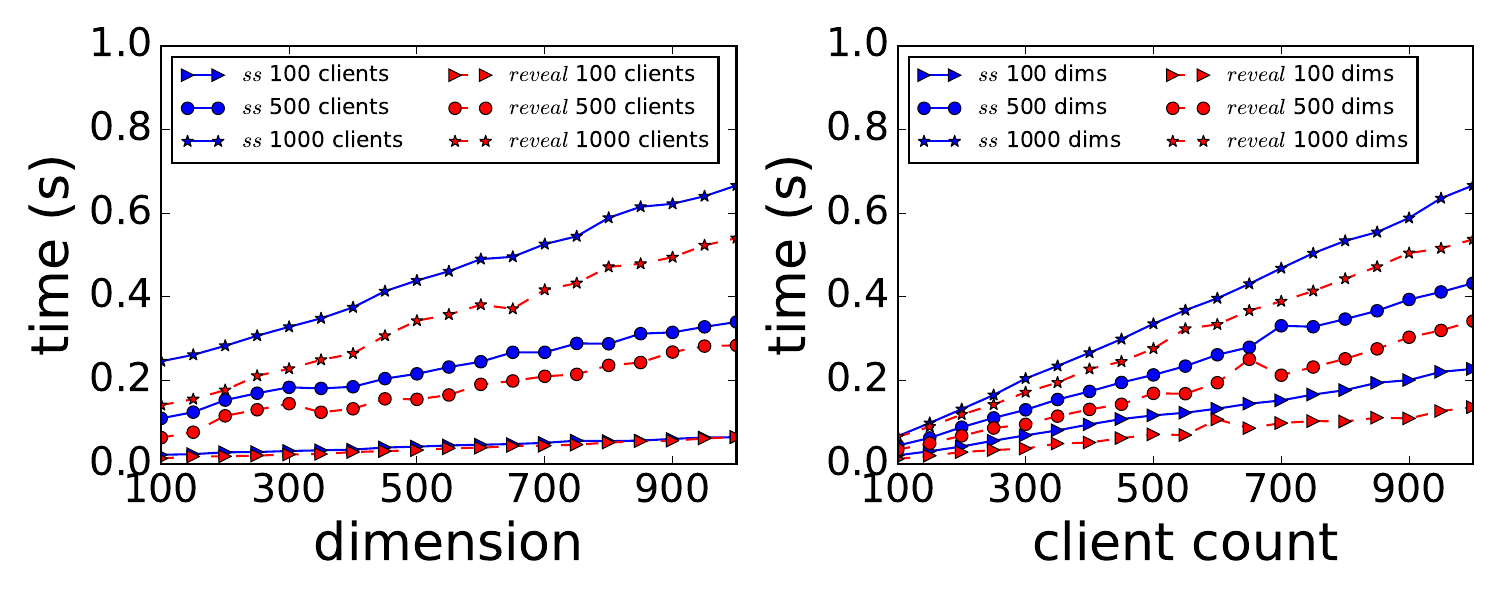}
	\caption{Performance of $ss$ and $reveal$.}
	\label{figure:time-client}
    \vspace{-0.1in}
\end{figure}

\para{Effects of batch operations.  } 
We evaluate the effectiveness of batching up, using two common operations: \emph{element-wise multiplication} and \emph{dot product} on vectors. 
For multiplication, we batch up the communication of independent operations, while for dot product of two $m$-dimensional vectors, we only need to transfer the dot-producted shares and  the communication cost is reduced from $O(m)$ to $O(1)$.
We vary the number of elements and measure the time consumption, and Fig.~\ref{figure:time-optimize} shows the result (the $y$-axis is the logarithm of time).
Both cases show benefits over $1000 \times$.

\begin{figure}
\centering
\includegraphics[width = 0.45\textwidth]{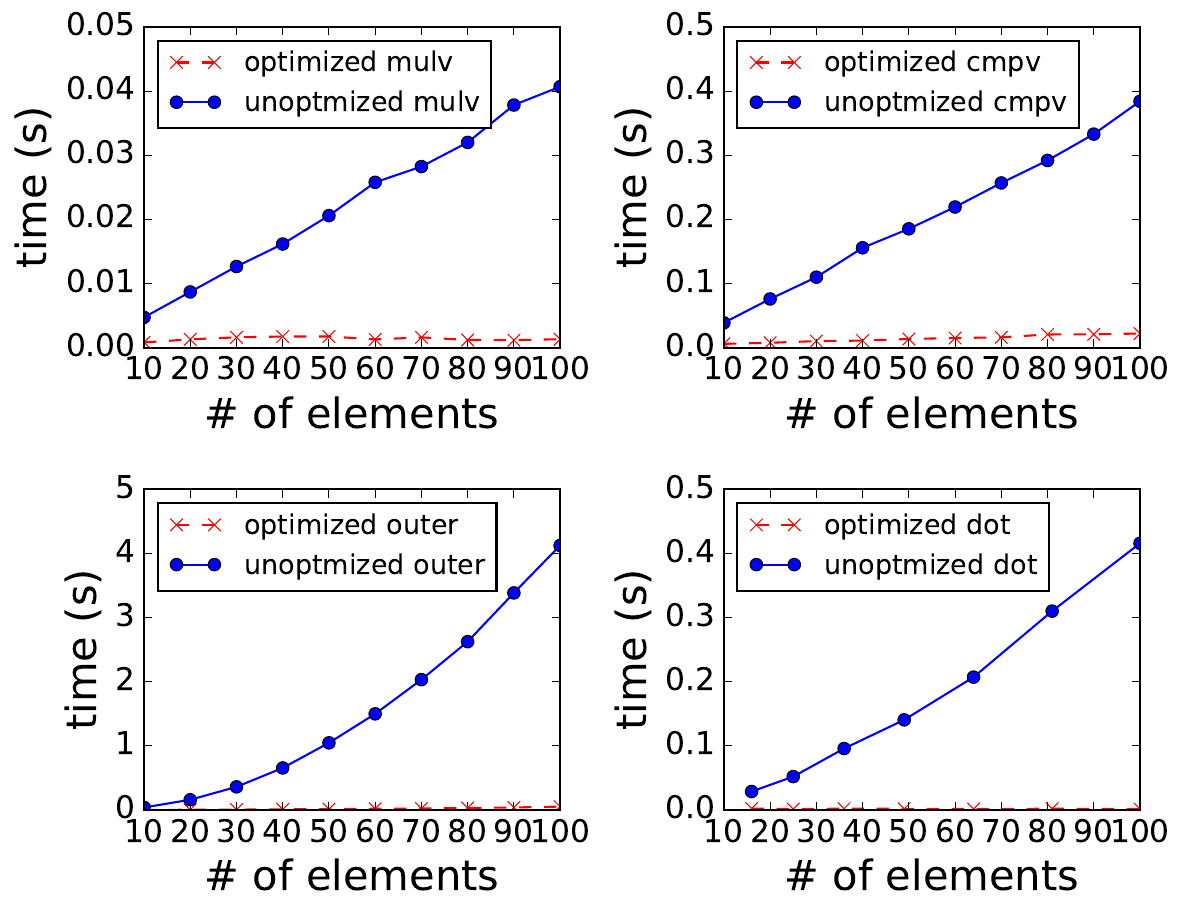}
\caption{The optimization of doing operations in batch.}
\label{figure:time-optimize}
    \vspace{-0.1in}
\end{figure}

%

\para{Effects of code optimizations.  }
We evaluate the common factor extraction and expression vectorization.  As these hand-written anti-patterns are usually small, we range the expression size from 2 to 10.  Figure~\ref{figure:time-ast} shows that more than $4\times$ performance improvement for five-term expressions in both situations. 

\begin{figure}[tb]
\centering
\includegraphics[width = 0.5\textwidth]{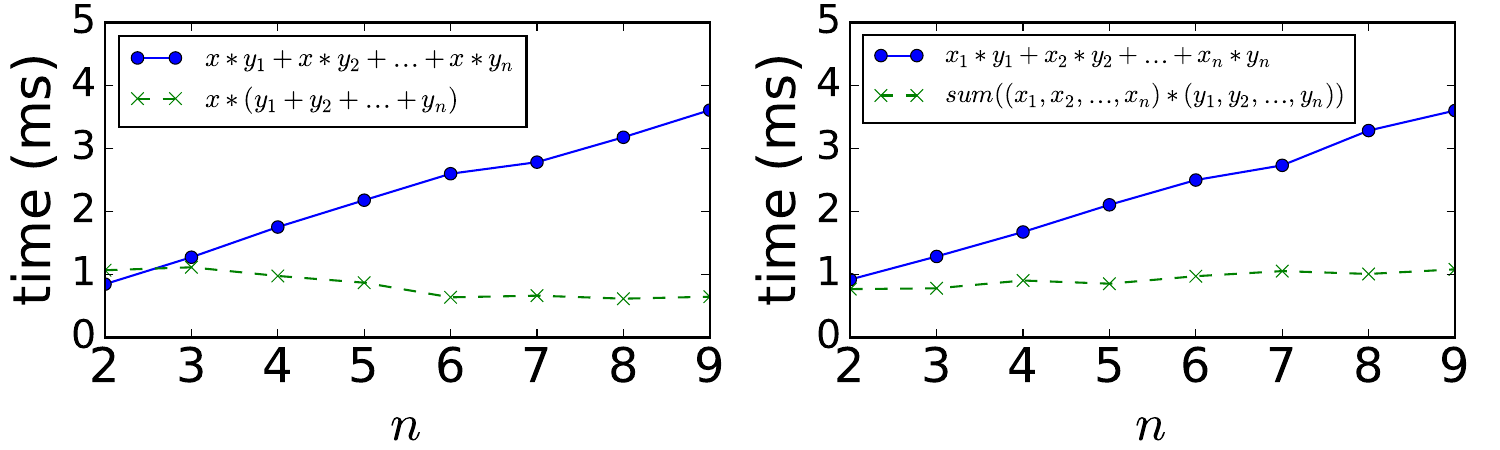}
    \vspace{-0.15in}
\caption{Code optimizer performance.}
\label{figure:time-ast}
    \vspace{-0.15in}
\end{figure}

\para{Disk-backed large array performance. }
The \sysname front-end provides a class \code{LargeArray} to automatically handle the large arrays that are too large to fit in the memory. 
As \code{LaregeClass} uses disks as back storage, we should consider the effect of the disk IO time to the overall performance.
To evaluate the performance of \code{LargeArray}, we use the Movielens dataset~\cite{MOVIELENS} which contains 1 million movie ratings from thousands of users. We encode the dataset to a $1,000,000 \times 5048$ matrix, which requires 150GB memory space in each machine. 
We then perform a dot product of a batch of elements in the dataset and a $5048$-dimensional vector.
We evaluate the performance by varying the batch size and \emph{randomly} choosing a batch of items in the dataset.
As Table~\ref{table:time_dot_large} shows, the disk IO becomes the main cost. The reason is that the program should sequentially scan the large array stored in the disk to retrieve the randomly choosen batch.

We then apply \code{LargeArray} to the training of real algorithms: 
logistic regression (LR) which is trained using SGD, and matrix factorization (MF)~\cite{MATRIX_FACTORIZATION_DP} which decomposes a large matrix into two smaller latent matrices for efficient prediction (in this paper, we decompose each $m \times n$ matrix to a $m \times 5$ matrix and a $5 \times n$ matrix).
 Table~\ref{table:time_algorithm_large} shows the result.

\begin{table}[tb]
	\centering
    \small
	\begin{tabular}{|c|c|c|c|}
		\hline
         & \tabincell{c}{communication + \\ computation} & \tabincell{c}{disk IO} & \tabincell{c}{total}\\
		\hline
        single & 0.38 & 0.7 & 1.08 \\
		\hline
        batched & 0.172 & 0.574 & 0.746 \\
        \hline
	\end{tabular}
	\caption{Time (milliseconds) for dot product of a large array.}
    \vspace{-0.15in}
	\label{table:time_dot_large}
\end{table}

\begin{table}[tb]
	\centering
    \small
	\begin{tabular}{|c|c|c|}
		\hline
        batch size & \tabincell{c}{LR} & \tabincell{c}{MF} \\
		\hline
        single & 0.027 & 0.037 \\
		\hline
        batched & 0.026 & 0.042 \\
        \hline
	\end{tabular}
	\caption{Time (seconds) for real algorithms with large arrays.}
    \vspace{-0.15in}
	\label{table:time_algorithm_large}
\end{table}

\subsection{Performance of real algorithms}
The focus of PrivPy is the \emph{algorithm performance} on real big datasets.  We present our experience in common algorithms, including \emph{logisgtic regression (LR)}, \emph{matrix factorization (MF)} and \emph{neural network (NN)}, using both ABY3 and our backend.  ABY3 engine has several configuration options, we use the most performance-optimized options of ABY3 (semi-honest assumption without precomputation) in all the evaluations.  


We perform our evaluation in both the LAN setting and the WAN setting, and use the MNIST dataset~\cite{MNIST} which includes $70,000$ labeled handwritten digits~\cite{HANDWRITTEN} with $28 \times 28$ pixels each.  
We evaluate the performance for both training and inference. 
And as our front-end supports both engines, we run \emph{the same} algorithm codes written with Python on the two engines.

Table~\ref{table:time_training} shows the average time consumed by 1 iteration of training. The logistic regression and matrix factorization is constructed as above, and the neural network (NN) has a $784$-dimensonal input layer, two $128$- hidden layers and a $10$-dimensional output layer.

For inference, we in addition evaulate the LeNet-5~\cite{LENET5} model to demonstrate convonlutional neural network (CNN). LeNet-5 has a 784-dimension input layer, 3 convolutional layers with a $5 \times 5$ kernel, 2 sum-pooling layers, 3 sigmoid layers, 1 dot product layer, 1 Radial Basis Function layer, and an \emph{argmin} function on a 10-dimension vector to get the output.
Then based on the LeNet-5 model, we add a batch normalization~\cite{BATCHNORMALIZATION} layer to each sigmoid layer to get a CNN+BN model.
The evaluation result is as Table~\ref{table:time_inference} shows.

From the evaluation results, we can see that our computation engine performs better than ABY3 for both training and inference, especially in the WAN setting.
This is because, although both ABY3 and our computation engine require no precomputation and have the same communication cost for each server, ABY3 requires 1 more round than our computation engine for fixed-point multiplication, thus causes lower performance.

\begin{table}[tb]
	\centering
    \small
	\begin{tabular}{|c|c|c|c|c|c|c|c|}
		\hline
        \multirow{2}{*}{\tabincell{c}{batch\\ size}} & \multirow{2}{*}{engine} & \multicolumn{3}{c|}{LAN} & \multicolumn{3}{c|}{WAN} \\
        \cline{3-8}
        & & LR & MF & NN & LR & MF & NN \\
		\hline
        \multirow{2}{*}{single} & ABY3 & 5.2e-3 & 7.4e-3 & 1.8e-2 & 2.16 & 0.62 & 1.27 \\
        \cline{2-8}
        & \sysname & 5.3e-3 & 7.1e-3 & 1.7e-2 & 2.61 & 0.37 & 1.16 \\
        \hline
        \multirow{2}{*}{batched} & ABY3 & 3.94 & 5.72 & 58.1 & 7.53 & 18.6 & 637 \\
        \cline{2-8}
        & \sysname & 3.92 & 5.67 & 52.5 & 7.3 & 13.2 & 554 \\
        \hline
	\end{tabular}
	\caption{Time (seconds) for training of real algorithms with different engines.}
	\label{table:time_training}
\end{table}

\begin{table}[tb]
	\centering
    \small
	\begin{tabular}{|c|c|c|c|c|c|c|c|}
		\hline
        \multirow{2}{*}{\tabincell{c}{batch\\ size}} & \multirow{2}{*}{engine} & \multicolumn{3}{c|}{LAN} & \multicolumn{3}{c|}{WAN} \\
        \cline{3-8}
        & & NN & CNN & \tabincell{c}{CNN\\+BN} & NN & CNN & \tabincell{c}{CNN\\+BN} \\
		\hline
        \multirow{2}{*}{single} & ABY3 & 1.3e-2 & 9.6e-2 & 0.16 & 2.43 & 6.83 & 8.07 \\
        \cline{2-8}
        & \sysname & 1.3e-2 & 9.6e-2 & 0.17 & 2.49 & 7.64 & 8.07 \\
        \hline
        \multirow{2}{*}{batched} & ABY3 & 1.45 & 12.6 & 13.2 & 8.12 & 58.9 & 59.5 \\
        \cline{2-8}
        & \sysname & 1.38 & 12.02 & 12.2 & 7.22 & 56.3 & 57.9 \\
        \hline
	\end{tabular}
	\caption{Time (seconds) for inference of real algorithms with different engines.}
	\label{table:time_inference}
\end{table}

Finally, we stress the usability of our front-end. Table~\ref{table:lines_of_codes} shows the lines of codes for each algorithm and the time for a student who focuses on data mining but is unfamiliar with cryptography to write each algorithm using our front-end.

\begin{table}[tb]
	\centering
    \small
    \begin{tabular}{|c|c|c|c|c|c|}
		\hline
         & \tabincell{c}{LR} & \tabincell{c}{MF} & NN & CNN & \tabincell{c}{CNN+BN} \\
		\hline
        lines & 42 & 25 & 9 & 83 & 87 \\
		\hline
        time & 0.7 & 0.5 & 0.1 & 1.5 & 1.5 \\
        \hline
	\end{tabular}
	\caption{Lines of codes and time (hours) for implementing real algorithms.}
	\label{table:lines_of_codes}
\end{table}

\section{Conclusion and Future Work}
\label{section:conclusion_future}

Over thirty years of MPC literature provides an ocean of protocols and systems great on certain aspects of performance, security or ease of programming.  
We believe it is time to integrate these techniques into an application-driven and coherent  system for machine learning tasks.  
\sysname is a framework with top-down design.  
At the top, it provides familiar Python-compatible interfaces with essential data types like real numbers and arrays, and use  code optimizer/checkers to avoid common mistakes.  
In the middle, using an intermediary for storage and communication, we build a composable PO system that helps decoupling the front-end with backend.  
At the low level, we design new protocols that improve computation speed.  
\sysname shows great potential: it handles large data set (1M-by-5K) and complex algorithms (CNN) fast, with minimal program porting effort.

\sysname opens up many future directions. 
Firstly, we are improving the \sysname computation engine to provide active security while preserving high efficiency. 
Secondly, we would like to port existing machine learning libraries to our front-end. 
Thirdly, we will support more computation engines.
Fourthly, although we focus on MPC in this work, we will introduce randomization to protect the final results~\cite{MPC+DP, PEM}. 
Last but not least, we will also improve fault tolerance mechanism to the servers.

\newpage
\newpage
\appendix

\begin{table}[htbp]
	\begin{tabular}{|c|l|}
		\hline
		$\phi$ & the big prime that determins the field \\
		\hline
		$\mathbb{Z}_\phi$ & the additive group of integers module $\phi$ \\
		\hline
		$b$ & the bound of numbers in the computation \\
		\hline
		$k$ & the scaling factor \\
		\hline
		$S(\cdot)$ & \tabincell{l}{the secret sharing function splitting integers \\ in $\mathbb{Z}_\phi$ to shares} \\
		\hline
        $\map{x}$ & \tabincell{l}{the corresponding integer in $\mathbb{Z}_\phi$ of \\ a real number $x$} \\
		\hline
        $\share{x}$ & \tabincell{l}{the secret sharing result of a real number $x$ \\ that is equivalent to $S(\map{x})$} \\
		\hline
        $\share{\cdot, \cdot}^{-1}$ & \tabincell{l}{the reverse process of $\share{\cdot}$ that maps \\ the secrets back to real numbers} \\
		\hline
		$I(\cdot)$ & \tabincell{l}{the helper function that converts integers in \\ $\mathbb{Z}_\phi$ to the signed representation}  \\
		\hline
		$x, y$ & \tabincell{l}{private variables} \\
		\hline
		$\tabincell{c}{$x_1, x_2$ \\ ${y_1, y_2}$}$ & \tabincell{l}{the shares of private variables} \\
		\hline
	\end{tabular}
	\caption{The notations in this paper.}
	\label{table:notation}
\end{table}

\subsection{Proof of security of the fixed-point multiplication protocol}
\label{appendix:proof_security_multiplication}

To argue the security of the fixed-point multiplication protocol, we define functionality $\mathcal{F}_{mult}$ in the ideal model, and prove that it is indistinguishable from the real protocol by constructing efficient simulators.

\noindent \begin{minipage}[h]{0.47\textwidth}
\fbox{
\parbox{\textwidth}{
\begin{center}
Functionality $\mathcal{F}_{mult}$
\end{center}
    After receiving $(x_1, x_1'), (y_1, y_1')$ from \server{1}, $(x_2, x_2'), (y_2, y_2')$ from \server{2}, $(x_a, x_a'), (y_a, y_a')$ from \server{a} and $(x_b, x_b'), (y_b, y_b')$ from \server{b}, 
    it does the following: \\
    \begin{enumerate}
        \item Set $x = x_1 + x_2$ and $y = y_1 + y_2$. \\
        \item Set $z = xy / 2^d$. \\
        \item Sample $z_1 \in \mathbb{Z}_{\phi}$ and $z_1' \in \mathbb{Z}_{\phi}$, and set $z_2 = z - z_1$ and $z_2' = z - z_1'$. \\
        \item Set $z_a = z_2, z_a' = z_1', z_b = z_1, z_b' = z_2'$. \\
        \item Send $(z_1, z_1')$, $(z_2, z_2')$, $(z_a, z_a')$ and $(z_b, z_b')$ to \server{1}, \server{2}, \server{a} and \server{b}, respectively.
    \end{enumerate}
}
}
\vspace{+0.1in}
\end{minipage}

The view of the \server{i} party ($i\in\{1,2,a,b\}$) during an execution of a protocol $\pi$ is denoted by $View^{\pi}$ and the output is denoted by $Output^{\pi}_i$.
We often ommit the superscript $\pi$ for simplicity.

\begin{defi} \label{def:semi-honest}
    Let $\pi$ be a protocol. We say that \textbf{$\pi$} securely realizes $\mathcal{F}_{mult}$ in the presence of semi-honest adversaries, 
    if for each $i \in \{1, 2, a, b\}$ there exists a probabilistic polynomial-time algorithm $Sim_i$ such that the real view and the corresponding simulated view are computationally indistinguishable.
\end{defi}

\begin{theo} \label{theo:Protocol1_security}
  Protocol 1 securely realizes $\mathcal{F}_{mult}$ presence of semi-honest adversaries.
\end{theo}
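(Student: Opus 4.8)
The plan is to follow the standard simulation-based paradigm for semi-honest security: for each $i\in\{1,2\}$ I would exhibit an efficient simulator $Sim_i$ whose output is computationally indistinguishable from the real view $View^{\pi}_i$ while forcing $Output^{\pi}_i=z_i$, exactly as required by Definition~\ref{def:semi-honest}. Because the servers are non-colluding, it suffices to simulate the view of a single server in isolation, so I would construct $Sim_1$ and $Sim_2$ separately; the two cases are symmetric and I would argue only $Sim_1$ in detail.

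First I would enumerate precisely what enters $View^{\pi}_1$: the input and local shares held by \server{1}, the pseudorandom values $r_{12}$ and $r_{12}'$ it draws from the seed it shares with \server{2}, and the two incoming messages, namely $t_b = x_by_b' + r_{ab}$ received from \server{b} and $t_a' = x_a'y_a - r_{ab}'$ received from \server{a}. The key structural observation — the same one underlying the SecureML and ABY3 arguments referenced in \sect{section:multiplication} — is that each incoming message is additively masked by a fresh value ($r_{ab}$ or $r_{ab}'$) generated from a seed held only by \server{a} and \server{b}, and therefore unknown to \server{1}. Hence, conditioned on everything \server{1} can compute on its own, each incoming message is pseudorandom.

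The simulator would then proceed as follows. $Sim_1$ samples $r_{12}$ and $r_{12}'$ exactly as the real server does, so this portion of the view is reproduced identically, and computes the term $t_1$ that \server{1} derives from its own inputs and randomness. Since \server{1} forms its output as $z_1=(t_1+t_b)/2^d$, the message $t_b$ is completely pinned down by the target output, so $Sim_1$ programs the output-determining message as $\tilde t_b = 2^dz_1 - t_1$ and samples the remaining incoming message uniformly in $\mathbb{Z}_{2^n}$. The indistinguishability claim then reduces to pseudorandomness: replacing the real masks $r_{ab},r_{ab}'$ by truly uniform values is undetectable unless the underlying PRF is broken, and a short hybrid over the two masks completes this step.

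The step I expect to be the main obstacle is the consistency argument for the programmed message: I must show that fixing $\tilde t_b = 2^dz_1 - t_1$ does not skew the distribution relative to the real execution. This works precisely because $t_b = x_by_b' + r_{ab}$ is uniform over $\mathbb{Z}_{2^n}$ from \server{1}'s vantage point, as the mask $r_{ab}$ is fresh and hidden, so conditioning it to yield the correct output is distributionally identical to the ideal sampling of $z_1$ in $\mathcal{F}_{mult}$. Once this is established, output correctness $Output^{\pi}_1 = z_1$ follows directly from the algebraic identity $2^dz_1 + 2^dz_2 = xy$ already verified in \sect{section:multiplication}, and the symmetric construction of $Sim_2$ discharges Definition~\ref{def:semi-honest} and hence Theorem~\ref{theo:Protocol1_security}.
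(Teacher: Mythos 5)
Your proposal is correct and follows essentially the same simulation-based route as the paper: construct $Sim_1$ and $Sim_2$, program the output-determining incoming message to be consistent with the prescribed output $z_i$, and argue the remaining received values are indistinguishable from uniform because they are additively masked by randomness unknown to the simulated server (the paper additionally remarks that $\mathbf{S_a}$ and $\mathbf{S_b}$ need no simulator since they produce no output and receive only one-time-padded values, a point you omit but which is outside Definition~\ref{def:semi-honest} as stated). If anything, your write-up tracks the actual messages of Protocol~\ref{protocol:multiplication} (the incoming $t_b$ and $t_a'$ and the PRF-derived masks $r_{ab}$, $r_{ab}'$, hence computational rather than perfect indistinguishability) more faithfully than the paper's own proof, which is phrased in terms of SecureML-style masks $r_x$, $r_y$ that do not literally appear in the protocol.
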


\para{Proof.}
Our proof is similar to the proof in \cite{IMPROVED_3PC}.
Specifically, we first construct an efficient simulator for \server{1}, which are referred to as $Sim_1$ and receives the input $X_1 = (x_1, x_1')$ and $Y_1 = (y_1, y_1')$, as well as the seed for generating pseudo-random numbers. 
$Sim_1$ generates $r^*_{12}$ and ${r'}^*_{12}$ using the seed, then calculates $t^*_1 = x_1y_1' - r^*_{12}$ and ${t'}^*_1 = x_1'y_1 - {r'}^*_{12}$. 
$Sim_1$ samples random numbers ${t'}^*_a \in \mathbb{Z}_\phi$ and $t^*_b \in \mathbb{Z}_\phi$, and sets $z^*_1 = (t^*_b + t^*_1) / 2^d$ and ${z'}^*_1 = ({t'}^*_a + {t'}^*_1) / 2^d$, then sends $z^*_1$ and ${z'}^*_1$ to $\mathcal{F}_{mult}$. 
$Sim_1$ adds $r^*_{12}$ and ${r'}^*_{12}$, ${t'}^*_a$ and $t^*_b$ to the view of \server{1}.
In a real execution, $r_{12}$ and ${r'}_{12}$ are pseudo-random numbers that are generated using the same seed from \server{1}, thus are equal to $r^*_{12}$ and ${r'}^*_{12}$. 
On the other hand, $t'_a$ and $t_b$ are masked by pseudo-random numbers generated by $S_a$ and $S_b$. 
As the pseudo-random numbers are generated using seeds unknown by \server{1}, $t'_a$ and $t_b$ are indistinguishable with truely-random numbers $({t'}^*_a$ and $t^*_b)$ for \server{1}. 
We can then conclude that $View_1$ and $Sim_1$ are indistinguishable.
The simulators for the other servers can be constructed in the same way.
$\qed$

\subsection{Numpy features implemented in \sysname}
\label{appendix:ndarray}
\alert{In \sysname front-end, we provide two Numpy features widely utilized to implement machine learning algorithms: \emph{broadcasting} and \emph{ndarray} methods. 

Broadcasting allows operations between arrays of different shapes, by ``broadcasting'' the smaller one automatically, as long as their dimensionalities match (see \cite{BROADCASTING} for details). For example, given a scalar $x$, a $4 \times 3$ array $A$, a $2 \times 4 \times 3$ array $B$ and a $2 \times 1 \times 3$ array $C$, the expressions $x \bigodot A$, $A \bigodot B$ and $B \bigodot C$ are all legal in \sysname, where $\bigodot$ can be $+, \times \text{and} >$ etc. Note that in \sysname, the above variables can be either public or private.

We also implement most of the ndarray methods of Numpy, with which application programmers can manipulate arrays conveniently and efficiently, except for the methods related with IO (we leave IO as the future work). Table~\ref{table:ndarray} lists the ndarray methods we have implemented. Please see \cite{NDARRAY} for details of \code{numpy.ndarray}.}

\begin{table}[htbp]
\footnotesize
\begin{tabular}
{|c|c|c|c|}
\hline
\code{all} & \code{any} & \code{append} & \code{argmax} \\
\hline
\code{argmin} & \code{argparition} & \code{argsort} & \code{clip} \\
\hline
\code{compress} & \code{copy} & \code{cumprod} & \code{cumsum} \\
\hline
\code{diag} & \code{dot} & \code{fill} & \code{flatten} \\
\hline
\code{item} & \code{itemset} & \code{max} & \code{mean} \\
\hline
\code{min} & \code{ones} & \code{outer} & \code{partition} \\
\hline
\code{prod} & \code{ptp} & \code{put} & \code{ravel} \\
\hline
\code{repeat} & \code{reshape} & \code{resize} & \code{searchsorted} \\
\hline
\code{sort} & \code{squeeze} & \code{std} & \code{sum} \\
\hline
\code{swapaxes} & \code{take} & \code{tile} & \code{trace} \\
\hline
\code{transpose} & \code{var} & \code{zeros} & \code{} \\
\hline
\end{tabular}
\caption{The ndarray methods implemented in \sysname.}
\label{table:ndarray}
\end{table}

\newpage
{\footnotesize \bibliographystyle{acm}
\bibliography{reference.bib}}

\end{document}